\newcommand{\reals}{{\mathbb R}}
\newcommand{\ints}{{\mathbb Z}}
\newcommand{\vor}{\operatorname{Vor}}
\newcommand{\term}{\emph}
\newcommand{\prob}{\operatorname{Pr}}
\newcommand{\abs}[1]{{\left| #1 \right|}}
\newcommand{\dotprod}[2]{ #1^\prime #2}
\newtheorem{theorem}{Theorem}
\newtheorem{example}{Example}
\newtheorem{lemma}{Lemma}
\title{On the error performance of the $A_n$ lattices}
\author{Robby McKilliam, Ramanan Subramanian, Emanuele Viterbo, I. Vaughan L. Clarkson}
\begin{document}

\newcommand{\calR}{\mathcal{R}}
\newcommand{\hist}{\operatorname{hist}}

\maketitle

\begin{abstract}
We consider the root lattice $A_n$ and derive explicit formulae for
the moments of its Voronoi cell.  We then show that these formulae
enable accurate prediction of the error probability of
lattice codes constructed from $A_n$.
\end{abstract}

\begin{keywords}
Lattices, lattice decoding, root lattice, probability of error, Voronoi cell.
\end{keywords}

\section{Introduction}\label{sec:introduction}

The \emph{root lattices} $A_n$, $D_n$, $E_6$, $E_7$ and $E_8$ have attracted particular attention as structured codes for the AWGN channel~\cite{SPLAG}. The highly symmetric structure of these lattices provides the grounds for extremely efficient decoding algorithms~\cite{Conway1982FastQuantDec,McKilliam2009CoxeterLattices,McKilliam2008}.  In this paper we consider codes constructed from the root lattice $A_n$ and derive formulae for accurately predicting the performance of these codes.  This is achieved by deriving formulae for the \emph{moments} of the \emph{Voronoi cell} of $A_n$.  Conway and Sloane suggested this approach to compute the quantizing constants (second order moments) of the root lattices,~\cite{Conway1982VoronoiRegions}.  In this paper we extend their technique to compute the moments of any order for $A_n$.

In two dimensions $A_2$ is the hexagonal lattice and in three dimensions $A_3$ is the face-centered cubic lattice.  These are the densest known sphere packings in dimensions two and three and our results automatically include low dimensional codes constructed using these packings.  In general, the lattice $A_n$ does not produce asymptotically good codes in large dimensions, but does offer a coding gain in small dimensions.  For these cases, we provide an error probability expression that can be computed to any degree of accuracy at any finite signal-to-noise ratio.
%We are hopeful that the techniques developed here may be applied to other lattices that do produce good codes.  In particular our techniques might apply to the Coxeter lattices $A_n^m$ of which the root lattice $E_8 \simeq A_8^3$ is a member~\cite{Coxeter1951,Martinet2003,McKilliam2009CoxeterLattices}.

This paper is organised as follows.  In Section~\ref{sec:latt-latt-codes} we give a brief overview of lattices and codes constructed from them, i.e., \emph{lattice codes}.  We describe lattice decoding and show how the probability of coding error can be expressed in terms of the moments of the \emph{Voronoi cell} of the lattice used.  Section~\ref{sec:main-result} states the main result (without proof), which gives recursive formula to compute the moments of the Voronoi cell of $A_n$.  Section~\ref{sec:lattice-a_n} describes the lattice $A_n$ and some of its properties.  An important property for our purposes is that the Voronoi cell of $A_n$ is precisely the projection of a $n+1$-dimensional hypercube orthogonal to one of its vertices~\cite{McKilliam2009CoxeterLattices,McKilliam2010thesis}.
 %This result was previously known~\cite{McKilliam2009CoxeterLattices,McKilliam2010thesis}.  The proof is short and we give it again here so that this paper is self contained.
In Section~\ref{sec:integr-funct-over} we use this property to show how integrals over the Voronoi cell of $A_n$ can be expressed as integrals over the $n$-dimensional hypercube. These integrals are solvable and we use them to obtain the moments of the Voronoi cell in Section~\ref{sec:powers-eucl-norm}.  In Section~\ref{sec:results-simulations} we plot the probability of error versus signal to noise ratio for codes constructed from the lattices $A_1 \simeq \ints$, $A_2$, $A_3 \simeq D_3$ and $A_4$.  We also plot the results of Monte-Carlo simulations that support our analytical results.
%An advantage of the analytical results is that the probability of error can calculated at high SNR, when the probabilty of error is small.  This is in contrast to Monte-Carlo simulations that are computationally infeasible at when the probability of error is small.  We display the performance of these lattice codes

\section{Lattices, lattice codes, and lattice decoding} \label{sec:latt-latt-codes}

A \term{lattice}\index{lattice}, $\Lambda$, is a discrete subset of points in $\reals^m$ such that
\[
   \Lambda = \{\xbf = \Bbf\ubf \mid \ubf \in \ints^n \}
 \]
 where $\Bbf \in \reals^{m \times n}$ is an $m \times n$ matrix of rank $n$, called the \term{generator matrix} or \term{basis matrix} or simply generator or basis. In particular, the set of $n$-tuples of integers $\ints^n$ is a lattice (with the identity matrix as the generator) and we call this the~\emph{integer lattice}. A lattice $\Lambda$ associated with a rank-$n$ generator matrix $\mathbf{B}$ is said to be $n$-dimensional. If the generator is square, i.e. $m = n$, then the lattice points span $\reals^n$ and we say that the lattice is \term{full rank}. If $\Bbf$ has more rows than columns, i.e. $m > n$, then the lattice points lie in a $n$-dimensional subspace of $\reals^m$. For any lattice $\Lambda$ with an $m \times n$ generator matrix, we define $\mathcal{S}_{\Lambda}$ to be the hyperplane spanned by the columns of the generator matrix. It is easy to see that $\mathcal{S}_\Lambda$ is then the $n$-dimensional subspace containing the lattice $\Lambda$.

The (open) \term{Voronoi cell}, denoted $\vor(\Lambda)$, of a lattice $\Lambda$ is the subset of $\mathcal{S}_{\Lambda}$ containing all points nearer (in Euclidean distance) to the lattice point at the origin than any other lattice point. It can be shown that the Voronoi cell is an $n$-dimensional convex polytope that is symmetric about the origin.  It is convenient to modify this definition of the Voronoi cell slightly so that the union of translated Voronoi cells $\cup_{\xbf \in \Lambda}\vor(\Lambda) + \xbf$ is equal to  $\mathcal{S}_{\Lambda}$.  That is, the Voronoi cell tessellates space when translated by points in $\Lambda$.  To ensure this we require that if a face of $\vor(\Lambda)$ is open, then its opposing face is closed. Specifically, if $\xbf \in \vor(\Lambda)$ is on the boundary of $\vor(\Lambda)$ then $-\xbf \notin \vor(\Lambda)$.  We wont specifically define which opposing face is open and which is closed as the results that follow hold for any choice of open and closed opposing faces.

The Voronoi cell encodes many interesting lattice properties such as the packing radius, covering radius, kissing number, minimal vectors, center density, thickness, and the normalized second moment (or quantizing constant)~\cite{Viterbo_diamond_cutting_1996, SPLAG}. The error probability of a lattice code can also be evaluated from the Voronoi cell as we will see.  There exist algorithms to completely enumerate the Voronoi cell of an arbitrary lattice~\cite{Viterbo_diamond_cutting_1996,Sikiric_complex_algs_vor_cells_2009,Sikiric_vor_reduction_covering_2008,Valentin2003_coverings_tilings_low_dimension}.  In general these algorithms are only computationally feasible when the dimension is small (approximately $n \leq 9$).  Even with a complete description of the Voronoi cell it is not necessarily easy to compute the probability of coding error.

The Voronoi cell is linked with the problem of \emph{lattice decoding}.  Given some point $\ybf \in \reals^n$ a \emph{lattice decoder} (or \emph{nearest lattice point algorithm}) returns the lattice point in $\Lambda$ that is nearest to $\ybf$~\cite{Agrell2002}.  Equivalently it returns the lattice point $\xbf$ such that the translated Voronoi cell $\vor(\Lambda) + \xbf$ contains $\ybf$.  Computationally lattice decoding is known to be NP-hard under certain conditions when the lattice itself, or rather a basis thereof, is considered as an additional input parameter~\cite{micciancio_hardness_2001, Jalden2005_sphere_decoding_complexity}. Nevertheless, algorithms exist that can compute the nearest lattice point in reasonable time if the dimension is small (approximately $n \leq 60$). One such algorithm is the \term{sphere decoder}~\cite{Viterbo_sphere_decoder_1999,Pohst_sphere_decoder_1981,Agrell2002}.
 %Kannan~\cite{Kannan1987_fast_general_np} suggested a different approach that is known to be asymptotically faster than the sphere decoder.
A good overview of these techniques is given by Agrell~et.~al.~\cite{Agrell2002}. Fast nearest point algorithms are known for specific lattices~\cite{Conway1982FastQuantDec, McKilliam2008,McKilliam2009CoxeterLattices,Vardy1993_leech_lattice_MLD}. For example, the root lattices $D_n$ and $A_n$ and their dual lattices $D_n^*$ and $A_n^*$ can be decoded in linear-time, i.e. in a number of operations of order $O(n)$~\cite{Conway1982FastQuantDec,McKilliam2009CoxeterLattices}.

%Approximate algorithms for computing the nearest point have also been studied.  A classic example is \term{Babai's nearest plane algorithm}~\citep{Babai1986}\index{Babai's nearest plane algorithm}, which requires $O(n^4)$ arithmetic operations in the worst case where $n$ is the dimension of the lattice and only $O(n^2)$ if the lattice basis is \term{Lov\'asz reduced}~\citep{Lenstra1982}\index{Lov\'asz reduced}. Recently, various approximate techniques for solving the nearest lattice point problem have been motivated by applications to MIMO communications.  An example is the \term{$K$-best algorithm}\index{K-best algorithm}~\citep{Zhan2006_K_best_sphere_decoder} that works similarly to the sequential $M$-algorithm~\citep{Anderson1984_seq_coding_alg} used in coding theory. Yet another example is the \term{fixed sphere decoder}~\citep{Jalden2009_error_prob_fixed_sphere_decoder,Barbero2008_fixed_sphere_decoder}. In this thesis we will make use of Babai's nearest plane algorithm, the sphere decoder and the $K$-best algorithm.  We will not detail the workings of these algorithms as excellent descriptions already exist in the literature cited above.

%For each lattice point $\ybf \in \Lambda$ there exists a Voronoi cell containing those point from $\reals^n$ nearer to $\ybf$ that any other lattice point.  Due to the structure of the lattice the Voronoi cell of each lattice point is the same shape.  Denote by $\vor(\Lambda)$ the Vornoi cell of the lattice point at the origin. NEAREST POINT PROBLEM.
\newcommand{\calX}{\mathcal X}
Lattices can be used to construct \emph{signal space codes}.  A signal space code $\calX$ of dimension $n$ is a finite set of vectors (or points) in the Euclidean space $\reals^n$.  Each vector in $\calX$ is called a \emph{codeword} and represents a particular signal.  The number of codewords is denoted by $\abs{\calX}$.  If each codeword is transmitted with equal probability then the rate of the code is $R = \frac{1}{n}\log_2\abs{\calX}$ bits per codeword. The average power of the code is given by $P = \frac{1}{\abs{\calX}}\sum_{\xbf \in \calX}\|\xbf\|^2$.  In the AWGN channel the received signal takes the form
\[
\ybf = \xbf + \wbf
\]
where $\ybf \in \reals^n$, $\xbf \in \calX$ and $\wbf$ is a vector of independent and identically distributed Gaussian random variables with variance $\sigma^2$.  If the receiver employs maximum likelihood decoding then the estimator of $\xbf$ given $\ybf$ at the receiver is
\begin{equation}\label{eq:mldecoder}
\hat{\xbf} = \underset{\xbf \in \calX}{\operatorname{argmin}}\;\| \ybf - \xbf \|^2.
%\hat{\xbf} &=& \mbox{arg~}\min_{\xbf \in \calX} \| \ybf - \xbf \|^2.
\end{equation}
That is, the receiver computes the codeword in $\calX$ nearest in Euclidean distance to the received signal $\ybf$.  Assuming that each codeword is transmitted with equal probability then the probability of correct decoding is
\[
P_C =\frac{1}{\abs{\calX}}\sum_{\xbf \in \calX} \prob( \hat{\xbf} = \xbf ),
\]
and the probability of error is
\[
P_E = 1 - \frac{1}{\abs{\calX}} \sum_{\xbf \in \calX} \prob(
\hat{\xbf} = \xbf ).
\]
%The maximum possible rate of reliable communications in this channel is given by the channel capacity $B\log_2\left(1 + \frac{P}{\sigma^2}\right)$.  If the codewords are chosen uniformly randomly on a sphere of radius $\sqrt{P}$ then as $n$ gets large the rate of the code can be made arbitrarily close to the channel capacity while simultaneously the probability of error can be made arbitrarily small.  Another way of stating this is that we can choose close to $2^{nR/B}$ codewords uniformly randomly on a sphere of radius $\sqrt{P}$ and as $n$ gets large the probability of error decreases to zero.

%A problem with randomly generated codes and maximum likelihood decoding is that there are no computationally efficient methods for encoding and decoding.  For encoding the transmitter must store a list of size $\abs{\calX} \approx 2^{nR/B}$ and for decoding the receiver must compute $\hat{\xbf}$ requiring a search over all $\abs{\calX} \approx 2^{nR/B}$ codewords.  As $n$ gets large this becomes infeasible.  To avoid this computational complexity it is preferable for a code to have some type of `structure' that is exploitable for efficient encoding and decoding.  One way of enforcing structure is to choose codewords from a lattice.  The idea being that encoding and decoding can now be performed efficiently by \emph{lattice decoding}, i.e. by finding a nearest point in a lattice.

A \emph{lattice code} is a signal space code with codewords taken from a finite subset of points of some lattice $\Lambda$ in $\reals^n$. There are infinitely many ways to choose a finite subset from a lattice, but common approaches make use of a bounded subset of $\reals^n$, called a \emph{shaping region} $S \subset \reals^n$.  The codewords are given by those lattice points inside the shaping region, that is,
\[
\calX = S \cap \Lambda.
\]
Common choices of shaping region are $n$-dimensional spheres, spherical shells, hypercubes, or the Voronoi cell of a \emph{sublattice} of $\Lambda$~\cite{Buda1989_some_opt_codes_structure,Erex2004_lattice_decoding,Conway1983VoronoiCodes}.  A consequence of the finiteness of a lattice code is that the maximum likelihood decoder~\eqref{eq:mldecoder} is \emph{not} equivalent to lattice decoding.  Computing a nearest lattice point will not in general return a lattice point from the code (the decoded lattice point might lie outside the shaping region).
 %A question that remained open for some time was whether lattice codes employing lattice decoding could achieve channel capacity.  This question was answered in the affirmative by Erez and Zamir~\cite{Erex2004_lattice_decoding}.

Let $\calX$ be a lattice code constructed from a lattice $\Lambda$.  Assuming that each codeword $\xbf \in \calX$ has an equal probability of being transmitted and that the receiver employs lattice decoding, then the average probability of correct decoding in the AWGN channel is
%\footnote{In order to approach capacity Erez and Zamir~\cite{Erex2004_lattice_decoding} employ a uniform dither and also a minimum mean square error scaling parameter $\alpha$ (the so called \emph{modulo-lattice additive noise channel}).  A result of this is that that observed noise distribution is no longer Gaussian. However, for large signal to noise ratio $\alpha \approx 1$ and the observed noise is very close to Gaussian.  In this case~\eqref{eq:PEgaussian} is a reasonable approximation.}
\begin{align}
P_C &\leq P_C(\Lambda) = \lim_{\abs{\calX}\rightarrow \infty}
 \frac{1}{\abs{\calX}} \sum_{\xbf\in \calX} \prob( \xbf + \wbf \in
\vor(\Lambda) + \xbf )  \nonumber \\
&= \prob(\xbf \in \vor(\Lambda) ) \nonumber \\
&=   \frac{1}{(\sqrt{2\pi}\sigma)^n} \int_{\vor(\Lambda)}
e^{-\|\xbf\|^2 / 2\sigma^2 } d\xbf. \label{eq:PEgaussian}
\end{align}
The probability of error is $P_E \leq 1 - P_C(\Lambda)$, where the upper bound is asymptotically tight for large constellations (the proportion of the codewords near the boundary of the shaping region becomes small).  By expanding $e^x = 1  + x + \frac{x^2}{2} + \dots$ according to its Maclaurin series we obtain
\begin{align}
P_C(\Lambda)  &= \frac{1}{(\sqrt{2\pi}\sigma)^n}\int_{\vor(\Lambda)} 1
- \frac{\|\xbf\|^2}{2\sigma^2} + \frac{\left(\|\xbf\|^2\right)^2}{
4\sigma^42!} - \dots d\xbf \nonumber \\
&= \frac{1}{(\sqrt{2\pi}\sigma)^n} \sum_{m=0}^\infty
\frac{(-1)^m}{2^m\sigma^{2m}m!} \int_{\vor(\Lambda)} \|\xbf\|^{2m}
d\xbf.  \label{eq:summomentproberror}
\end{align}
So, to obtain arbitrarily accurate approximations to the probability of error it is enough to know the values of $\int_{\vor(\Lambda)} \|\xbf\|^{2m} d\xbf$ for $m=1,2\dots$ for some sufficiently large $m$.  The number of terms required will increase as the noise variance gets smaller.  This implies that the bound with a fixed number of terms is not asymptotically tight but is very accurate up to a finite signal-to-noise ratio.

\newcommand{\calM}{M}
In this paper we focus on $n$-dimensional lattice codes constructed from the family of lattices called $A_n$ and  we will derive expressions for the $m$th \emph{moment}s
\[
 \calM_n(m) = \int_{\vor(A_n)} \|\xbf\|^{2m} d\xbf.
\]
These can be summed in (\ref{eq:summomentproberror}) to give arbitrarily accurate approximations for the probability of error.

\section{The main result}\label{sec:main-result}

We now state our main result.  The $m$th moment $\calM_n(m)$ of the lattice $A_n$ satisfies
\begin{equation}\label{eq:theCmformula}
\frac{\calM_n(m)}{m!} = \frac{n\sqrt{n+1}}{n+2m}\sum_{k=0}^{m}\sum_{a =0}^{k}\sum_{b=0}^{k-a} \frac{G(n-1,a,2k - 2a - b)}{(-1)^{k-a}H(n,m,k,a,b)}
\end{equation}
where the function
\[
H(n,m,k,a,b) = \frac{(n+1)^{m-a}a!(m-k)!b! (k-a-b)!}{2^{b} n^{m-k}},
\]
and the function $G(n,c,d)$ satisfies the recursion
\begin{equation}\label{eq:theGrecursion}
G(n,c,d) = \sum_{c'=0}^{c} \sum_{d'=0}^{d} \binom{c}{c'}\binom{d}{d'} \frac{G(n-1,c-c',d-d')}{2c'+d'+1}
\end{equation}
with the initial conditions
\[
G(1,c,d) = \frac{1}{2c+d+1} \qquad \text{and} \qquad G(n,0,0) = 1.
\]
For fixed $m$ it is possible to solve this recursion in $n$ and obtain formula for the $\calM_n(m)$ in terms of $n$.  The first five such formula are:
\begin{align*}
\calM_n(0) &= \sqrt{n+1} \qquad \text{the volume of } \vor(A_n),\\
\calM_n(1) &= \frac{n(n+3)}{12\sqrt{n+1}} \qquad \text{the second moment of $\vor{A_n}$~ \cite[p. 462]{SPLAG}}, \\
\calM_n(2) &=  \frac{50 n + 55 n^2 + 34 n^3 + 5 n^4}{720 (1 + n)^{3/2}}, \\
\calM_n(3) &= \frac{1960 n + 2142 n^2 + 2681 n^3 + 1423 n^4 + 399 n^5 + 35 n^6}{60480 (1 + n)^{5/2}}, \\
\calM_n(4) &= \frac{93744 n + 34356 n^2 + 112172 n^3 + 89343 n^4 + 53224 n^5 + 17246 n^6 + 2940 n^7 + 175 n^8}{3628800 (1 + n)^{7/2}}.
\end{align*}
We have explicitly tabulated these formula for $m=0$ to $40$. For larger $m$ direct evaluation for specific $n$ from the recursive formula is preferable.  We will derive these results in Section~\ref{sec:powers-eucl-norm}, but first need some properties of the lattice $A_n$.

\section{The lattice $A_n$}\label{sec:lattice-a_n}
Let $H$ be the hyperplane orthogonal to the all ones vector of length $n+1$, denoted by $\onebf$, that is
\[
\onebf = \left[ \begin{array}{cccc} 1 & 1 & \cdots & 1 \end{array} \right]^\prime,
\]
where superscript $^\prime$ indicates the transpose.  Any vector in $H$ has the property that the sum (and therefore the mean) of its elements is zero and for this reason $H$ is often referred to as the \emph{zero-sum plane} or the \emph{zero-mean plane}.
The lattice $A_n$ is the intersection of the integer lattice $\ints^{n+1}$ with the zero-sum plane, that is
\begin{equation}
\label{eq:An_sub_Zn}
  A_{n} = \ints^{n+1} \cap H = \big\{ \xbf \in \ints^{n+1} \mid \dotprod{\xbf}{\onebf} = 0  \big\}.
\end{equation}
Equivalently, $A_n$ consists of all of those points in $\ints^{n+1}$ with coordinate sum equal to zero.
% A generator matrix for $A_n$ is the $(n+1)\times n$ matrix
% \[
% \left[ \begin{array}{rrrrrrr}
% 1 & 0 & 0 &  \cdots & 0 & 0 \\
% -1 & 1 & 0 &  \cdots & 0 & 0 \\
% 0 & -1 & 1 &  \cdots & 0 & 0 \\
% 0 & 0 & -1 & \cdots & 0 & 0 \\
% \vdots & \vdots & \vdots & \ddots & \vdots & \vdots \\
% 0 & 0 & 0 & \cdots & -1 & 1 \\
% 0 & 0 & 0 & \cdots & 0 & -1  \\
% \end{array} \right].
% \]
The lattice has $n(n+1)$ minimal vectors, each of squared Euclidean length $2$, so the packing radius is $\frac{1}{\sqrt{2}}$.  The $n$-volume of the Voronoi cell $\vor(A_n)$ is $\sqrt{n+1}$~\cite[p. 108]{SPLAG}.

The Voronoi cell of $A_n$ is closely related to the $n+1$ dimensional hypercube $\vor(\ints^{n+1})$ as the next theorem will show.  This result has appeared previously~\cite{McKilliam2009CoxeterLattices,McKilliam2010thesis}, but we repeat it here so that this paper is self contained.  We denote by
\[
\Qbf = \Ibf - \frac{\onebf\onebf^\prime}{\onebf^\prime \onebf} = \Ibf - \frac{\onebf\onebf^\prime}{n+1}
\]
the projection matrix orthogonal to $\onebf$ (i.e. into the zero-sum plane) where $\Ibf$ is the $n+1$ by $n+1$ identity matrix.  Given a set $S$ of vectors from $\reals^{n+1}$ we write $\Qbf S$  to denote the set with elements $\Qbf \sbf$ for all $\sbf \in S$, i.e. the set containing the projection of the vectors from $S$.

\begin{lemma} \label{lem:QVorZnsubsetVorAn}
The projection of $\vor(\ints^{n+1})$ into the zero-sum plane is a subset of $\vor(A_{n})$.  That is,
\[
\Qbf\vor(\ints^{n+1}) \subseteq \vor(A_{n}).
\]
\end{lemma}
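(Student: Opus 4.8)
The plan is to reduce everything to an orthogonal decomposition of $\reals^{n+1}$ into the zero-sum plane $H$ and its orthogonal complement, the line $\reals\onebf$. Recall first that $\vor(\ints^{n+1})$ is, up to the choice of which opposing faces are open and which are closed, the hypercube $[-\tfrac12,\tfrac12]^{n+1}$, and that a point $\xbf$ lies in it only if $\|\xbf\| \le \|\xbf - \zbf\|$ for every $\zbf \in \ints^{n+1}$. Since $A_n = \ints^{n+1} \cap H$ is a subset of $\ints^{n+1}$, this immediately gives $\|\xbf\| \le \|\xbf - \vbf\|$ for every $\vbf \in A_n$ whenever $\xbf \in \vor(\ints^{n+1})$.

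First I would write an arbitrary $\xbf \in \reals^{n+1}$ as $\xbf = \Qbf\xbf + \vect{p}$, where $\vect{p} = \xbf - \Qbf\xbf$ is a scalar multiple of $\onebf$ and hence orthogonal to the zero-sum plane. For any $\vbf \in A_n \subset H$ we have $\Qbf\vbf = \vbf$ and $\vect{p}^\prime\vbf = 0$, so both $\Qbf\xbf$ and $\Qbf\xbf - \vbf$ lie in $H$ and are orthogonal to $\vect{p}$. The Pythagorean theorem then gives
\[
\|\xbf - \vbf\|^2 = \|\Qbf\xbf - \vbf\|^2 + \|\vect{p}\|^2
\qquad\text{and}\qquad
\|\xbf\|^2 = \|\Qbf\xbf\|^2 + \|\vect{p}\|^2,
\]
and subtracting cancels the $\|\vect{p}\|^2$ terms, leaving the key identity
\[
\|\Qbf\xbf - \vbf\|^2 - \|\Qbf\xbf\|^2 = \|\xbf - \vbf\|^2 - \|\xbf\|^2 .
\]

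Now I would take $\xbf \in \vor(\ints^{n+1})$ and let $\vbf$ range over $A_n$. By the observation in the first paragraph the right-hand side above is non-negative, hence so is the left-hand side; that is, $\Qbf\xbf$ is at least as close to the origin as to every other point of $A_n$, which is exactly the condition for $\Qbf\xbf$ to lie in $\vor(A_n)$. This proves $\Qbf\vor(\ints^{n+1}) \subseteq \vor(A_n)$.

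The only delicate point — and the closest thing to an obstacle here — is matching the open/closed face conventions, since $\vor$ has been defined as a half-open polytope so that its translates tile space. Because the displayed identity equates the two ``excess distance'' quantities $\|\Qbf\xbf-\vbf\|^2-\|\Qbf\xbf\|^2$ and $\|\xbf-\vbf\|^2-\|\xbf\|^2$ exactly, a strict inequality on the cube side transfers to a strict inequality on the $A_n$ side and conversely, so a point on a face of $\vor(\ints^{n+1})$ maps to the correspondingly open or closed side of the matching face of $\vor(A_n)$. One checks that the two conventions can be fixed compatibly (the paper already permits any consistent choice), and with that the inclusion holds as stated.
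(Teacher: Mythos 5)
Your proof is correct and is essentially the paper's own argument: both rest on the orthogonal decomposition $\xbf = \Qbf\xbf + t\onebf$ and the fact that $\vbf^\prime\onebf = 0$ for $\vbf \in A_n \subseteq \ints^{n+1}$ cancels the component along $\onebf$, so that the excess distance to any point of $A_n$ is preserved under projection. The paper phrases this as a proof by contradiction with explicit expansion of the norms rather than invoking Pythagoras directly, and (like you) it sidesteps the open/closed face convention by noting the choice is immaterial, but the substance is identical.
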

\begin{IEEEproof}
Let $\ybf \in \vor(\ints^{n+1})$.  Decompose $\ybf$ into orthogonal components so that $\ybf = \Qbf \ybf + t \onebf$ for some $t \in \reals$.  Then $\Qbf\ybf \in \Qbf\vor(\ints^{n+1})$.  Assume that $\Qbf\ybf \notin \vor(A_n)$.  Then there exists some $\xbf \in A_n$ such that
\begin{align*}
\|\xbf - \Qbf\ybf\|^2 < \|\zerobf - \Qbf\ybf\|^2 & \Rightarrow \|\xbf - \ybf + t\onebf\|^2 < \|\ybf - t\onebf\|^2 \\
%& \Rightarrow \|\xbf - \ybf\|^2 + 2t(\xbf - \ybf)'\onebf + t^2\|\onebf\|^2 < \|\ybf\|^2 - 2t\ybf'\onebf + t^2\|\onebf\|^2 \\
& \Rightarrow \|\xbf - \ybf\|^2 + 2t\xbf'\onebf < \|\ybf\|^2.
\end{align*}
By definition \eqref{eq:An_sub_Zn} $\xbf'\onebf = 0$ so $\|\xbf - \ybf\|^2 < \|\ybf\|^2$.  This violates that $\ybf \in \vor(\ints^{n+1})$ and hence $\Qbf\ybf \in \vor(A_n)$.\footnote{This proof can be generalised to show that for any lattice $L$ and hyperplane $P$ such that $P\cap L$ is also a lattice it is true that $p\vor(L) \subseteq \vor(L \cap P)$ where $p$ indicates the orthogonal projection into $P$~\cite[Lemma~2.1]{McKilliam2010thesis}.}
\end{IEEEproof}

\begin{theorem}  \label{thm:VorAn=QVorZn1}
The projection of $\vor(\ints^{n+1})$ into the zero-sum plane is equal to $\vor(A_{n})$. That is,
\[
\vor(A_n) = \Qbf\vor(\ints^{n+1}).
\]
\end{theorem}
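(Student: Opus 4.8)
The containment $\Qbf\vor(\ints^{n+1})\subseteq\vor(A_n)$ is exactly \Lem{lem:QVorZnsubsetVorAn}, so the plan is to prove the reverse inclusion $\vor(A_n)\subseteq\Qbf\vor(\ints^{n+1})$. Given $\xbf\in\vor(A_n)$ I want to exhibit a point $\ybf$ of the hypercube $\vor(\ints^{n+1})$ whose projection is $\xbf$. Since $\vor(A_n)\subseteq H$, the vector $\xbf$ has zero coordinate sum and is fixed by $\Qbf$, so the natural candidate is $\ybf=\xbf+t\onebf$ for a suitable scalar $t$: any such $\ybf$ automatically satisfies $\Qbf\ybf=\Qbf\xbf+t\Qbf\onebf=\xbf$ because $\Qbf$ fixes $\xbf$ and annihilates $\onebf$. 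All that remains is to choose $t$ so that $\ybf$ lands in the cube, i.e.\ so that every coordinate $y_i$ satisfies $-\tfrac12\le y_i\le\tfrac12$.

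First I would bound the spread of the coordinates of $\xbf$. For any pair of indices $i\neq j$ the vector $\mathbf{e}_i-\mathbf{e}_j$ (difference of standard basis vectors) is an integer vector with zero coordinate sum, hence lies in $A_n$, and it has squared Euclidean length $2$. The defining inequality of the Voronoi cell, $\magn{\xbf}^2\le\magn{\xbf-\vbf}^2$ for every $\vbf\in A_n$, applied to $\vbf=\mathbf{e}_i-\mathbf{e}_j$ expands (after cancelling $\magn{\xbf}^2$) to $x_i-x_j\le\tfrac12\magn{\mathbf{e}_i-\mathbf{e}_j}^2=1$. As $i,j$ range over all pairs this gives $\max_i x_i-\min_i x_i\le 1$; that is the only structural fact about $\vor(A_n)$ that the argument needs, and it does not require identifying the full set of Voronoi-relevant vectors.

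With that in hand the second step is immediate: take $t=-\tfrac12(\max_i x_i+\min_i x_i)$ and $\ybf=\xbf+t\onebf$. The coordinate bound yields $y_i\le\tfrac12(\max_i x_i-\min_i x_i)\le\tfrac12$ and $y_i\ge-\tfrac12(\max_i x_i-\min_i x_i)\ge-\tfrac12$ for every $i$, so $\ybf$ lies in the hypercube $\vor(\ints^{n+1})=[-\tfrac12,\tfrac12]^{n+1}$ (for each coordinate, $0$ is a nearest integer to $y_i$). Since also $\Qbf\ybf=\xbf$, we conclude $\xbf\in\Qbf\vor(\ints^{n+1})$, which closes the reverse inclusion.

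The one genuine subtlety — and it is minor — is bookkeeping on the boundary: the Voronoi cells carry the open/closed face convention fixed earlier, and a point $\xbf$ on $\partial\vor(A_n)$ (the case $\max_i x_i-\min_i x_i=1$) must be sent to the "correct" side of the cube's boundary. Because the statement is insensitive to how the opposing faces are split, this is settled in a line by choosing the open/closed faces of $\vor(\ints^{n+1})$ compatibly with those of $\vor(A_n)$. An alternative that sidesteps boundaries altogether is a volume comparison: by \Lem{lem:QVorZnsubsetVorAn} one convex body contains the other, and Cauchy's projection formula applied to the cube (its $2(n+1)$ facets each have unit $n$-volume and unit normals $\pm\mathbf{e}_i$) gives the $n$-volume of the shadow onto $H$ as $\tfrac12\cdot 2(n+1)\cdot\magn{\onebf}^{-1}=\sqrt{n+1}$, which equals the known volume $\sqrt{n+1}$ of $\vor(A_n)$; equal volume plus containment forces equality.
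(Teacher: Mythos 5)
Your proof is correct, but your primary argument is genuinely different from the paper's. The paper proves the reverse inclusion indirectly: it takes the containment from \Lem{lem:QVorZnsubsetVorAn}, computes the $n$-volume of the projected cube via the formula of Berger~et.~al.\ (obtaining $\sum_i \onebf'\mathbf{e}_i/\magn{\onebf} = \sqrt{n+1}$), notes this equals the known volume $\sqrt{n+1}$ of $\vor(A_n)$, and concludes equality of the two polytopes --- exactly the ``alternative'' you sketch in your last sentence, with Cauchy's projection formula playing the role of the cited volume result. Your main argument is instead direct and constructive: testing $\xbf\in\vor(A_n)$ against the lattice vectors $\mathbf{e}_i-\mathbf{e}_j\in A_n$ yields the spread bound $\max_i x_i-\min_i x_i\le 1$, and the explicit lift $\ybf=\xbf-\tfrac12(\max_i x_i+\min_i x_i)\onebf$ then lands in the cube and projects back to $\xbf$. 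This buys self-containedness (no appeal to the tabulated volume of $\vor(A_n)$ or to a projection-volume theorem) and produces an explicit preimage for every point, at the cost of a slightly longer computation; the paper's route is shorter given the two cited facts. Both arguments share the same benign looseness about which boundary faces are open or closed --- your version puts $\ybf$ only in the closed cube $[-\tfrac12,\tfrac12]^{n+1}$, while the paper's volume comparison only pins the cells down up to measure zero --- and the paper explicitly declines to fix that convention, so your handling of it is at least as careful as the original.
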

\begin{proof}
Let $\ebf_i$ denote a vector with $i$th element equal to one and the remaining elements zero.  The $n$-volume of $\vor(A_n)$ is $\sqrt{n + 1}$.  From Berger~et.~al.~\cite[Theorem 1.1]{Burger1996} we find that the $n$-volume of the projected hypercube $\Qbf\vor(\ints^{n+1})$ is equal to
\[
\sum_{i = 1}^{n+1} \frac{\dotprod{\onebf}{\ebf_i}}{\|\onebf\|} =  \sum_{i = 1}^{n+1} \frac{1}{\sqrt{n+1}} = \sqrt{n + 1}
\]
also. It follows from Lemma~\ref{lem:QVorZnsubsetVorAn} that $\Qbf\vor(\ints^{n+1}) \subseteq \vor(A_n)$, so, because the volumes are the same, and because $\vor(A_n)$ and $\Qbf\vor(\ints^{n+1})$ are polytopes, we have $\vor(A_n) = \Qbf\vor(\ints^{n+1})$.
\end{proof}

\section{Integrating a function over $\vor(A_n)$}\label{sec:integr-funct-over}

We would like to be able to integrate functions over the Voronoi cell of $A_n$.  Consider a function $f : \reals^{n+1} \mapsto \reals$.  The definition we have made for $A_n$ above places it in the $n$-dimensional zero-sum plane, lying in $\reals^{n+1}$.  The Voronoi cell is a subset of the zero-sum plane that has zero $n+1$-dimensional volume.  So, the integral $\int_{\vor(A_n)} f(\xbf) d \xbf$ is equal to zero.  This is not what we intend.  By an appropriate change of variables it would be possible to write the Voronoi cell $\vor(A_n)$ in an $n$-dimensional coordinate system, and then integrate.  However, we find the following approach simpler.  Given a set $S$ of vectors from the zero sum plane, let $S \times \onebf$ denote the set of elements that can be written as $\xbf + \ybf$ where $\xbf \in S$ and $\ybf = k\onebf$ for some $k \in [-\nicefrac{1}{2},\nicefrac{1}{2}]$.  Now, the integral over the Voronoi cell can be written as
\begin{equation}\label{eq:vorfint}
\int_{\vor(A_n) \times \onebf} f(\Qbf\xbf) d\xbf = \int_{\Qbf\vor(\ints^{n+1}) \times \onebf} f(\Qbf\xbf) d\xbf.
\end{equation}
It is not immediately clear how an integral over $\vor(A_n) \times \onebf$ should be performed.  Consider the following simpler integral over the hypercube $\vor(\ints^{n+1})$,
\begin{equation}\label{eq:unormalised}
\int_{\vor(\ints^{n+1})} f(\Qbf\xbf) d\xbf.
\end{equation}
This integral is not equal to~\eqref{eq:vorfint} because, although $\Qbf\xbf$ is always an element of $\vor(A_n)$, the integral is not uniform over $\vor(A_n)$.  To see this, consider some $\xbf \in \vor(\ints^{n+1})$ and let $x_{\text{max}}$ be the maximum element of $\xbf$ and $x_{\text{min}}$ be the minimum element.  Then $\xbf +  k \onebf \in \vor(\ints^{n+1})$ for those $k \in [-\nicefrac{1}{2} - x_{\text{min}}, \nicefrac{1}{2} - x_{\text{max}})$.  The length of this interval is $1 - x_{\text{max}} + x_{\text{min}}$ so the (one dimensional) volume of points in $\vor(\ints^{n+1})$ that, once projected orthogonally to $\onebf$, are equal to $\Qbf\xbf$ is given by
\[
\|\onebf\|(1 - x_{\text{max}} + x_{\text{min}}) = \sqrt{n+1}(1 - x_{\text{max}} + x_{\text{min}}).
\]
The integral~\eqref{eq:vorfint} can be obtained by normalising~\eqref{eq:unormalised} by this length, that is,
\[
\int_{\vor(A_n) \times \onebf} f(\Qbf\xbf) d\xbf =  \int_{\vor(\ints^{n+1})} \frac{f(\Qbf\xbf)}{\sqrt{n+1}(1 - x_{\text{max}} + x_{\text{min}})}  d\xbf.
\]
The primary advantage of this integral is that the bounds are given by the hypercube $\vor(\ints^{n+1})$.

Let us now restrict $f(\xbf)$ so that it depends only on the magnitude $\|\xbf\|$, for example $f(\xbf) = \|\xbf\|^{2m}$ could be a power of the Euclidean norm of $\xbf$.  Now $f(\xbf)$ is invariant to permutation of $\xbf$.  Let $\xbf$ be such that $x_1$ is the maximum element and $x_2$ is the minimum element.  Our integral is now equal to
\[
\frac{n (n+1)}{\sqrt{n+1}} \int^{1/2}_{-1/2} \int^{x_1}_{-1/2}  \int^{x_1}_{x_2} \cdots \int^{x_1}_{x_2} \frac{f(\Qbf\xbf)}{1 - x_{1} + x_{2}}  dx_{n+1} \, \dots \, dx_2 \, dx_1.
\]
The factor $n(n+1)$ arises because there are $n(n+1)$ ways to place two elements (i.e. $x_1$ and $x_2$) into $n+1$ positions.

We can make further simplifications.  Letting $t = x_1 - x_2$ and $y = x_1 + 1/2$ and changing variables gives
\[
n \sqrt{n+1} \int^{1}_{0} \int^{y}_{0}  \int^{y - 1/2}_{y - t - 1/2} \cdots \int^{y-1/2}_{y-t-1/2} \frac{f(\Qbf\xbf )}{1 - t}  dx_{n+1} \, \dots \, dx_3 \, dt \, dy,
\]
and letting $w_{i-2} = x_i - y + 1/2 + t$ for $i = 3,\dots,n+1$ gives
\begin{equation}\label{eq:intwewilluse}
n \sqrt{n+1} \int^{1}_{0} \int^{y}_{0}  \int^{t}_{0} \cdots \int^{t}_{0} \frac{f(\Qbf\xbf )}{1 - t}  dw_{n-1} \, \dots \, dw_1 \, dt \, dy.
\end{equation}
Observe that $\xbf = \wbf + (y - t - \nicefrac{1}{2})\onebf$ where $\wbf$ is the column vector
\[
\wbf = [t, 0, w_1, w_2, \dots, w_{n-1}]^\prime.
\]
Projecting orthogonal to $\onebf$ gives $\Qbf\xbf = \Qbf \wbf$.  Interestingly $\wbf$ does not contain $y$ so the term inside the integral does not depend on $y$.  This is the integral we will use to compute the moments of $A_n$.

\begin{example}\emph{\textbf{(The volume of the Voronoi cell)}
In order to demonstrate this approach we will derive the $0$th moment (i.e. the volume) of the Voronoi cell using~(\ref{eq:intwewilluse}).  Setting $f(\Qbf\wbf) = \|\Qbf\wbf\|^0 = 1$ we obtain,
\begin{align*}
M_n(0) &= n\sqrt{n+1} \int^{1}_{0} \int^{y}_{0} \int^{t}_{0} \cdots \int^{t}_{0} \frac{1}{1 - t} \,dw_{n-1}\dots dw_1 \, dt  \, dy\\
 &= n\sqrt{n+1} \int^{y}_{0} \int_{0}^{1}  \frac{t^{n-1}}{1 - t} \, dt  \, dy. \\
&= n\sqrt{n+1}  \int^{1}_{0} \beta(y, n, 0) \, dy = \sqrt{n+1}.
\end{align*}
as required.  Here $\beta(x,a,b) =  \int_{0}^{x}  t^{a-1}(1 - t)^{b-1} \,dt$ is the incomplete beta function~\cite{Pearson_tables_of_beta_functions} and we have used the identity $\int^{1}_{0} \beta(y, n, 0) \, dy = \frac{1}{n}$.}
\end{example}

\section{The moments of $A_n$}\label{sec:powers-eucl-norm}

We now derive expressions for the $\calM_n(m)$.  Setting $f(\Qbf\xbf) = \left(\|\Qbf\wbf\|^{2}\right)^m$ in~\eqref{eq:intwewilluse} we obtain,
\[
\frac{\calM_n(m)}{n\sqrt{n+1}} = \int_0^1 \int_0^y \int_0^t \cdots \int_0^t \frac{\left(\|\Qbf\wbf\|^2\right)^m}{1 - t} \,dw_{n-1} \, \dots \, dw_1 \, dt \, dy.
\]
Now $\|\Qbf\wbf\|^2 = \|\wbf\|^2 - \frac{1}{n+1}(\wbf^\prime \onebf)^2$ and recalling that $\wbf = [t,0,w_1,\dots,w_{n-1}]'$ we can write
\begin{align*}
\|\Qbf\wbf\|^2 &= \|\wbf\|^2 - \frac{1}{n+1}(\wbf^\prime \onebf)^2 \\
&= t^2 + \sum_{i=1}^{n+1}w_i^2 - \frac{1}{n+1}\left( t + \sum_{i=1}^{n+1}w_i \right)^2 \\
&= t^2 + \sum_{i=1}^{n+1}w_i^2 - \frac{1}{n+1}\left( t^2 + 2 t \sum_{i=1}^{n+1}w_i + \left(\sum_{i=1}^{n+1}w_i \right)^2 \right)\\
&= C + D,
\end{align*}
say, where
\[
C = \left(\frac{n}{n+1} \right) t^2 \qquad \text{and} \qquad D = A - \frac{2t}{n+1}B - \frac{1}{n+1}B^2,
\]
and where,
\[
 A = \sum_{i=1}^{n-1}w_i^2 \qquad \text{and} \qquad B = \sum_{i=1}^{n-1}w_i.
\]
Now,
\[
 \frac{\calM_n(m)}{n\sqrt{n+1}} =  \int_0^1 \int_0^y \int_0^t \cdots \int_0^t \frac{(C+D)^m}{1 - t} \,dw_{n-1} \, \dots \, dw_1 \, dt \, dy,
\]
and by expanding the binomial $(C+D)^m$ we get
\[
\frac{\calM_n(m)}{n\sqrt{n+1}} =  \int_0^1 \int_0^y \frac{1}{1 - t}\sum_{k=0}^{m}\binom{m}{k} C^{m-k} \int_0^t \cdots \int_0^t \, D^{k} \,dw_{n-1} \, \dots \, dw_1 \, dt \, dy.
\]
Expanding $D^k$ as a trinomial gives
\begin{align*}
D^k &= \sum_{k_1+k_2+k_3=k} \frac{k! A^{k_1} B^{2k_3 + k_2}}{k_1! k_2! k_3!} \left(\frac{-1}{n+1}\right)^{k_2+k_3}2^{k_2}t^{k_2}  \\
&=  \sum_{a=0}^k\sum_{b=0}^{k-a} \frac{k!  A^{a} B^{2k - 2a - b}}{a! b! (k-a-b)!} \left(\frac{-1}{n+1}\right)^{k - a}2^{b}t^{b}
\end{align*}
where the second line follows by setting $k_1 = a$, $k_2 = b$ and $k_3 = k - a - b$.  In Appendix~\ref{sec:mult-type-integr} we show that the integral of $A^{a} B^{2k - 2a - b}$ over $w_1,\dots w_{n-1}$ is
\begin{equation} \label{eq:recurseint}
\int_0^t \cdots \int_0^t A^{a} B^{2k-2a-b} dw_{n-1} \, \dots \, dw_1 = t^{n-1+2k - b} G(n-1,a,2k - 2a - b).
\end{equation}
where $G(n,c,d)$ satisfies the recursion given by~\eqref{eq:theGrecursion}.  So, let $P$ satisfy
\begin{align*}
P &= t^{1 - n - 2k}\int_0^t \cdots \int_0^t D^k dw_{n-1} \, \dots \, dw_1 \\
&= \sum_{a=0}^k\sum_{b=0}^{k-a} \frac{2^{b} k!G(n-1,a,2k - 2a - b)}{a! b! (k-a-b)!} \left(\frac{-1}{n+1}\right)^{k - a}.
\end{align*}
Now $C^{m-k} = \left(\frac{n}{n+1} \right)^{m-k} t^{2(m-k)}$ and
\begin{align*}
\frac{\calM_n(m)}{n\sqrt{n+1}} &= \sum_{k=0}^{m} \binom{m}{k} \left(\frac{n}{n+1} \right)^{m-k} P \int_0^1 \int_0^y \frac{t^{n - 1 + 2m}}{1 - t} \, dt \, dy \\
&= \sum_{k=0}^{m} \binom{m}{k} \left(\frac{n}{n+1} \right)^{m-k} P \int_0^1 \beta(y,n+2m,0) dy \\
&= \frac{1}{n+2m} \sum_{k=0}^{m} \binom{m}{k} \left(\frac{n}{n+1} \right)^{m-k} P.
\end{align*}
This expression is equivalent to that from~(\ref{eq:theCmformula}).

\section{Results and simulations}\label{sec:results-simulations}

We now plot the probability of coding error versus signal to noise ratio (SNR) for the lattices $A_1, A_2, A_3, A_4, A_5$ and $A_8$.  For these plots the SNR is related to noise variance according to~\cite{Viterbo_diamond_cutting_1996}
\[
\text{SNR} = \frac{V^{2/n}}{4\sigma^2},
\]
where $V$ is the volume of the Voronoi cell and $n$ is the dimension of the lattice.  Figure~\ref{fig:peplots} shows the `exact' probability of error (correct to 16 decimal places) computed using the moments $\calM_n(m)$ and~\eqref{eq:summomentproberror} (solid line).  The number of moments needed to ensure a certain number of decimal places accuracy depends on $n$ and also on the noise variance $\sigma^2$.  At most 321 moments where needed for Figure~\ref{fig:peplots}.  We also display the probability of error computed approximately by Monte-Carlo simulation (dots).  %We have used the fast nearest point algorithm for $A_n$ described in~\cite{McKilliam2009CoxeterLattices} and~\cite[p. 448]{SPLAG}.
The simulations are iterated until 500 error events occur.

The plot also display an approximation for the probability of error for the 8-dimensional $E_8$ lattice.  The approximation is made in the usual way by applying the union bound to the minimal vectors of the lattice~\cite[p.~71]{SPLAG}.  The $E_8$ lattice has 240 minimal vectors of length $\sqrt{2}$.  The packing radius of $E_8$ is therefore $\rho = \sqrt{2}/2$.  Applying the union bound the probability of error satisfies
\newcommand{\erfc}{\operatorname{erfc}}
\newcommand{\erf}{\operatorname{erf}}
\[
P_E \leq 240\erfc\left( \frac{\rho}{\sqrt{2}\sigma} \right) = 240\erfc\left(\frac{1}{2\sigma}\right)
\]
where $\erfc(x) = 1 - \erf(x)$ is the complementary error function.  For the $E_8$ lattice this approximation is an upper bound because the relevant vectors of $E_8$ (those vectors that define the Voronoi cell) are precisely the 240 minimal vectors.  %The second moment of this representation of $E_8$ is $\frac{929}{1620}$ so SNR is related to $\sigma^2$ by $\text{SNR} = \frac{929}{3240\sigma^2}$.

\begin{figure*}[tp]
	\centering
		\includegraphics{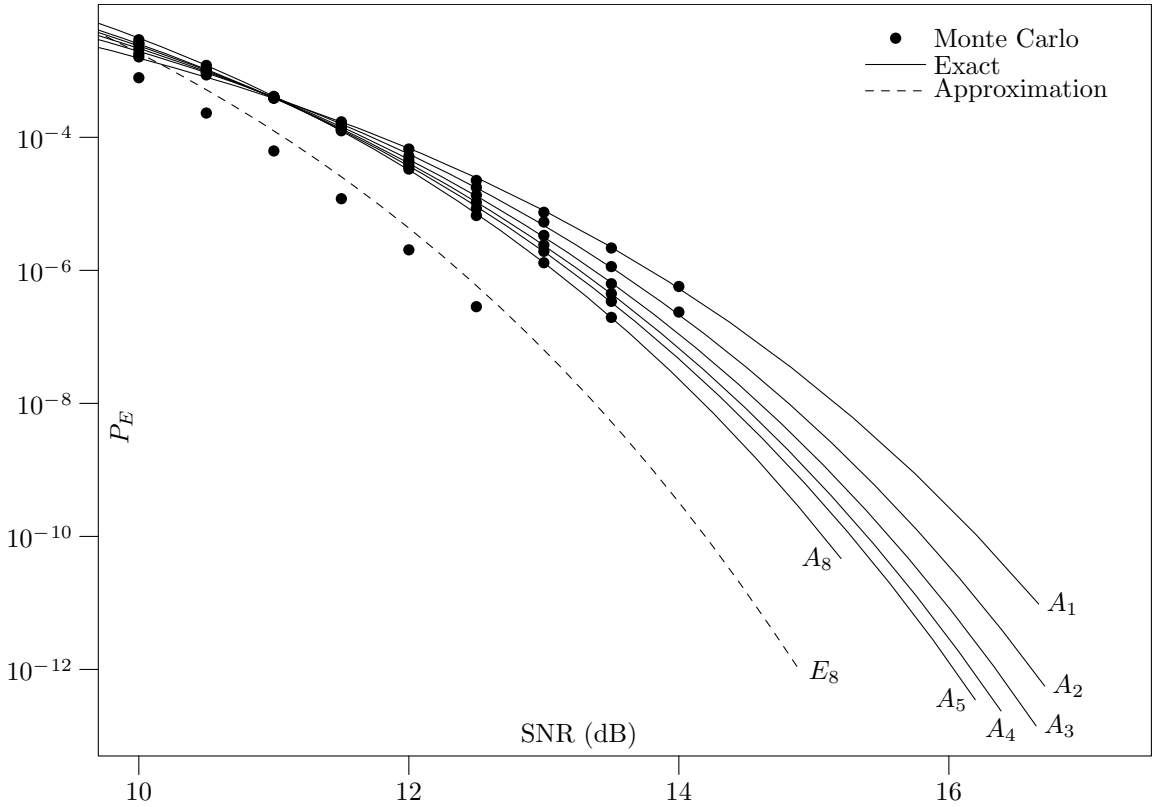}
		\caption{The probability of error versus SNR for $A_1 \simeq \ints$, $A_2$, $A_3\simeq D_3, A_4, A_5, A_8$ and $E_8$.}
		\label{fig:peplots}
\end{figure*}

\section{Conclusion}

Recursive formulae for the moments of the Voronoi cell of the lattice $A_n$ were found.  These enable accurate prediction of the performance of codes constructed from $A_n$.  In two dimensions $A_2$ is the hexagonal lattice and in three dimensions $A_3$ is the face-centered cubic lattice.  Our results include codes constructed using these packings as a special case.  %The lattice $A_n$ does not produce good codes in large dimension, but we are hopeful that the techniques developed here can be applied to other lattices.  In particular similar techniques might apply to the family Coxeter lattices, which contains the root lattice $E_8$.

\appendix

\subsection{A multinomial type integral over a hypercube}\label{sec:mult-type-integr}

In~(\ref{eq:recurseint}) we required to evaluate integrals of the form
\[
F(n-1,a,2k - 2a - b) = \int^{t}_{0} \cdots \int^{t}_{0} A^a B^{2k - 2a - b} \,dw_{n-1}\cdots dw_{1},
\]
or equivalently, integrals of the form
\[
F(n,c,d) = \int^{t}_{0} \cdots \int^{t}_{0} \left(\sum_{j=1}^{n} x_j^2\right)^c \left( \sum_{i=1}^{n} x_i \right)^d \,dx_1\cdots dx_{n}
\]
where $n,c$ and $d$ are integers.  We shall find a recursion describing this integral.  Write
\[
F(n,c,d) = \int^t_0 \cdots \int^t_0 \left(x_n^2 + \sum_{j=1}^{n-1} x_j^2\right)^c \left( x_n + \sum_{i=1}^{n-1} x_i \right)^d \,dx_1\cdots dx_n.
\]
Expanding the two binomials gives
\begin{align*}
F(n,c,d) &=  \int^t_0 \cdots \int^t_0 \sum_{c'=0}^{c} \binom{c}{c'} x_n^{2c'} \left(\sum_{j=1}^{n-1} x_j^2\right)^{c-c'}  \sum_{d'=0}^{d}\binom{d}{d'} x_n^{d'} \left(\sum_{i=1}^{n-1} x_i \right)^{d-d'} \,dx_1\cdots dx_{n} \\
&=  \sum_{c'=0}^{c} \sum_{d'=0}^d \int^t_0 \cdots \int^t_0 \binom{c}{c'} \binom{d}{d'} x_n^{2c'+d'} \left(\sum_{j=1}^{n-1} x_j^2\right)^{c-c'} \left(\sum_{i=1}^{n-1} x_i \right)^{d-d'} \,dx_1\cdots dx_{n}.
\end{align*}
Integrating the $x_n$ term gives
\[
F(n,c,d) = \sum_{c'=0}^{c} \sum_{d'=0}^{d} \binom{c}{c'}\binom{d}{d'} \frac{t^{2c'+d'+1}}{2c'+d'+1} \int^t_0 \cdots \int^t_0 \left(\sum_{j=1}^{n-1}x_j^2\right)^{c-c'}\left(\sum_{i=1}^{n-1} x_i \right)^{d-d'} \,dx_1\cdots dx_{n-1}.
\]
Note that
\[
F(n-1,c-c',d-d')  = \int^t_0 \cdots \int^t_0 \left(\sum_{j=1}^{n-1}x_j^2\right)^{c-c'}\left(\sum_{i=1}^{n-1} x_i \right)^{d-d'} \,dx_1\cdots dx_{n-1}.
\]
So $F(n,c,d)$ satisfies the recursion
\[
F(n,c,d) = \sum_{c'=0}^{c} \sum_{d'=0}^{d} \binom{c}{c'}\binom{d}{d'} \frac{t^{2c'+d'}}{2c'+d'+1}F(n-1,c-c',d-d')
\]
with the initial conditions
\[
F(1,c,d) = \frac{t^{2c+d+1}}{2c+d+1} \qquad \text{and} \qquad F(n,0,0) = t^n.
\]
The $F(n,c,d)$ can be written as $t^{n+2c+d}G(n,c,d)$ where $G(n,c,d)$ is rational.  To see this write
\begin{align*}
F(n,c,d) &=  \sum_{c'=0}^{c} \sum_{d'=0}^{d} \binom{c}{c'}\binom{d}{d'} \frac{t^{2c'+d'+1}}{2c'+d'+1}t^{n-1+2(c-c')+d-d'}G(n-1,c-c',d-d') \\
&=  t^{n+2c+d}\sum_{c'=0}^{c} \sum_{d'=0}^{d} \binom{c}{c'}\binom{d}{d'} \frac{G(n-1,c-c',d-d')}{2c'+d'+1} \\
&=  t^{n+2c+d}G(n,c,d).
\end{align*}
Now $G(n,c,d)$ is the rational number satisfying the recursion
\[
G(n,c,d) = \sum_{c'=0}^{c} \sum_{d'=0}^{d} \binom{c}{c'}\binom{d}{d'} \frac{G(n-1,c-c',d-d')}{2c'+d'+1}
\]
with the initial conditions
\[
G(1,c,d) = \frac{1}{2c+d+1} \qquad \text{and} \qquad G(n,0,0) = 1.
\]

\subsection{Solving this recursion for fixed $d$ and $c$}\label{sec:solv-this-recurs}
\newcommand{\calG}{\mathcal G}

For fixed $d$ and $c$ this recursion can be solved explicitly.  Write
\[
G(n,c,d) = G(n-1,c,d) + \sum_{ (c',d') \neq (0,0)} \binom{c}{c'}\binom{d}{d'} \frac{G(n-1,c-c',d-d')}{2c'+d'+1}
\]
where the sum $\sum_{ (c',d') \neq (0,0)}$ is over all $0 \leq c' \leq c$ and $0 \leq d' \leq d$ except when both $d$ and $c$ are zero.  Denote by $\calG(z,c,d)$ the $z$-transform of $G(n,c,d)$.  Taking the $z$-transform of both sides in the equation above gives
\[
\calG(z,c,d) = \frac{z^{-1}}{1-z^{-1}} \sum_{ (c',d') \neq (0,0)} \binom{c}{c'}\binom{d}{d'} \frac{\calG(z,c-c',d-d')}{2c'+d'+1}.
\]
So the $z$-transform $\calG(z,c,d)$ satisfies this recursive equation.  The initial condition is $\calG(z,0,0) = \frac{z^{-1}}{1 - z^{-1}}$.  By inverting this $z$-transform and using the resultant expressions in~\eqref{eq:theCmformula} we obtain formulae in $n$ for the moment $\calM_n(m)$.  This procedure was used to generate the formula described in Section~\ref{sec:main-result}.  Mathematica 8.0 was used to perform these calculations.

% \section{Conclusion}
% \begin{itemize}
% \item We can use these techniques to compute other things about the Voronoi cell, i.e. the average $L_p$ norms.
% \item Not restricted to Gaussian noise.  Any noise pdf that can be described by the moments will work.
% \item Can other lattices be done?
% \end{itemize}

\small
\bibliography{bib}

\begin{thebibliography}{10}

\bibitem{SPLAG}
J.~H. Conway and N.~J.~A. Sloane,
\newblock {\em Sphere packings, lattices and groups},
\newblock Springer, New York, 3rd edition, 1998.

\bibitem{Conway1982FastQuantDec}
J.~H. Conway and N.~J.~A. Sloane,
\newblock ``Fast quantizing and decoding and algorithms for lattice quantizers
  and codes,''
\newblock {\em IEEE Trans. Inform. Theory}, vol. 28, no. 2, pp. 227--232, Mar.
  1982.

\bibitem{McKilliam2009CoxeterLattices}
R.~G. McKilliam, W.~D. Smith, and I.~V.~L. Clarkson,
\newblock ``Linear-time nearest point algorithms for {Coxeter} lattices,''
\newblock {\em IEEE Trans. Inform. Theory}, vol. 56, no. 3, pp. 1015--1022,
  Mar. 2010.

\bibitem{McKilliam2008}
R.~G. McKilliam, I.~V.~L. Clarkson, and B.~G. Quinn,
\newblock ``An algorithm to compute the nearest point in the lattice
  ${A}_{n}^*$,''
\newblock {\em IEEE Trans. Inform. Theory}, vol. 54, no. 9, pp. 4378--4381,
  Sep. 2008.

\bibitem{Conway1982VoronoiRegions}
J.~H. Conway and N.~J.~A. Sloane,
\newblock ``Voronoi regions of lattices, second moments of polytopes, and
  quantization,''
\newblock {\em IEEE Trans. Inform. Theory}, vol. 28, no. 2, pp. 211--226, Mar
  1982.

\bibitem{McKilliam2010thesis}
R.~G. McKilliam,
\newblock {\em Lattice theory, circular statistics and polynomial phase
  signals},
\newblock Ph.D. thesis, University of Queensland, Australia, December 2010.

\bibitem{Viterbo_diamond_cutting_1996}
E.~Viterbo and E.~Biglieri,
\newblock ``Computing the {Voronoi} cell of a lattice: the diamond-cutting
  algorithm,''
\newblock {\em IEEE Trans. Inform. Theory}, vol. 42, no. 1, pp. 161--171, Jan.
  1996.

\bibitem{Sikiric_complex_algs_vor_cells_2009}
M.~D. Sikiri\`c, A.~Sch\"urmann, and F.~Vallentin,
\newblock ``Complexity and algorithms for computing {Voronoi} cells of
  lattices,''
\newblock {\em Mathematics of Computation}, , no. 78, pp. 1713--1731, Feb.
  2009.

\bibitem{Sikiric_vor_reduction_covering_2008}
M.~D. Sikiri\`c, A.~Sch\"urmann, and F.~Vallentin,
\newblock ``A generalistation of {Voronoi's} reduction theory and its
  application,''
\newblock {\em Duke Mathematical Journal}, vol. 142, no. 1, pp. 127--164, 2008.

\bibitem{Valentin2003_coverings_tilings_low_dimension}
F.~Valentin,
\newblock {\em Sphere coverings, lattices, and tilings (in low dimensions)},
\newblock Ph.D. thesis, Zentrum Mathematik, Technische Universit\"at M\"unchen,
  November 2003.

\bibitem{Agrell2002}
E.~Agrell, T.~Eriksson, A.~Vardy, and K.~Zeger,
\newblock ``Closest point search in lattices,''
\newblock {\em IEEE Trans. Inform. Theory}, vol. 48, no. 8, pp. 2201--2214,
  Aug. 2002.

\bibitem{micciancio_hardness_2001}
D.~Micciancio,
\newblock ``The hardness of the closest vector problem with preprocessing,''
\newblock {\em IEEE Trans. Inform. Theory}, vol. 47, no. 3, pp. 1212--1215,
  2001.

\bibitem{Jalden2005_sphere_decoding_complexity}
J.~Jalden and B.~Ottersten,
\newblock ``On the complexity of sphere decoding in digital communications,''
\newblock {\em IEEE Trans. Sig. Process.}, vol. 53, no. 4, pp. 1474--1484,
  April 2005.

\bibitem{Viterbo_sphere_decoder_1999}
E.~Viterbo and J.~Boutros,
\newblock ``A universal lattice code decoder for fading channels,''
\newblock {\em IEEE Trans. Inform. Theory}, vol. 45, no. 5, pp. 1639--1642,
  Jul. 1999.

\bibitem{Pohst_sphere_decoder_1981}
M.~Pohst,
\newblock ``On the computation of lattice vectors of minimal length, successive
  minima and reduced bases with applications,''
\newblock {\em SIGSAM Bull.}, vol. 15, no. 1, pp. 37--44, 1981.

\bibitem{Vardy1993_leech_lattice_MLD}
A.~Vardy and Y.~Be'ery,
\newblock ``Maximum likelihood decoding of the {Leech} lattice,''
\newblock {\em IEEE Trans. Inform. Theory}, vol. 39, no. 4, pp. 1435--1444,
  1993.

\bibitem{Buda1989_some_opt_codes_structure}
R.~de~Buda,
\newblock ``Some optimal codes have structure,''
\newblock {\em IEEE J. Sel. Areas Commun.}, vol. 7, no. 6, pp. 893--899, 1989.

\bibitem{Erex2004_lattice_decoding}
U.~Erez and R.~Zamir,
\newblock ``Achieving {$1/2 \log(1+{\text{SNR}})$} on the {AWGN} channel with
  lattice encoding and decoding,''
\newblock {\em IEEE Trans. Inform. Theory}, vol. 50, no. 10, pp. 2293--2314,
  Oct. 2004.

\bibitem{Conway1983VoronoiCodes}
J.~Conway and N.~Sloane,
\newblock ``A fast encoding method for lattice codes and quantizers,''
\newblock {\em IEEE Trans. Inform. Theory}, vol. 29, no. 6, pp. 820--824, Nov
  1983.

\bibitem{Burger1996}
T.~Burger, P.~Gritzmann, and V.~Klee,
\newblock ``Polytope projection and projection polytopes,''
\newblock {\em The American Mathematical Monthly}, vol. 103, no. 9, pp.
  742--755, Nov 1996.

\bibitem{Pearson_tables_of_beta_functions}
K.~Pearson,
\newblock {\em Tables of Incomplete Beta Functions},
\newblock Cambridge University Press, 2nd edition, 1968.

\end{thebibliography}

% \section{Notable references}
% \begin{itemize}
% \item Conway and Sloane compute the second moment of all the root lattices and there duals~\cite{Conway1982VoronoiRegions}.  This is really \emph{the} paper on this topic.  The approach is to break the Voronoi cells into simplicies and compute the second moment of the simplicies.

% \begin{quotation}
% In the second application, the same Euclidean code
% $y_1,\dots ,y_n$ is used as a code for the Gaussian channel. Now
% the Voronoi regions are the decoding regions: all points $x$
% in the interior of $V(y_i)$ are decoded as $y_i$. If the codewords
% are equally likely and all the Voronoi regions $V(y_i)$ are
% congruent to a polytope P, the probability of correct decoding is proportional to
% \[
% \int_P e^{x^\prime x} dx.
% \]
% The description of the Voronoi regions given in Section III
% thus makes it possible to calculate this probability exactly
% for many Euclidean codes. These results will be described
% elsewhere.
% \end{quotation}

% But it doesn't appear to be published elsewhere.  I'm not really sure how you would go about doing it either, because I think integrating $e^{x^\prime x}$ over an arbitray simplex is hard.

% \item Baldoni \emph{et. al.}~\cite{Baldoni_how_to_integrate_poly_over_simplex_2008} claim to show that in general integrating a polynomial under a simplex is NP-hard.  However, if you fix the degree of the polynomial then it can be done in polynomial time.

% \end{itemize}

\end{document}